\renewenvironment{abstract}
{\small\vspace{-1em}
\begin{center}
\bfseries\abstractname\vspace{-.5em}\vspace{0pt}
\end{center}
\list{}{
\setlength{\leftmargin}{0.6in}%
\setlength{\rightmargin}{\leftmargin}}%
\item\relax}
{\endlist}
\declaretheorem[name=Theorem, numberwithin=section]{theorem}
\declaretheorem[name=Lemma, sibling=theorem]{lemma}
\declaretheorem[name=Problem, sibling=theorem]{problem}
\declaretheorem[name=Claim, sibling=theorem]{claim}
\def\cqedsymbol{\ifmmode$\lrcorner$\else{\unskip\nobreak\hfil
\penalty50\hskip1em\null\nobreak\hfil$\lrcorner$
\parfillskip=0pt\finalhyphendemerits=0\endgraf}\fi}
\def\Pr{\mathbb{P}}
\def\Ex{\mathbb{E}}
\newcommand{\NN}{\mathbb{N}} 
\tikzstyle{vertex}=[circle, draw, fill=black!50,
\let\leq\leqslant
\let\geq\geqslant
\newcommand{\qu}{\textsc{Query}}
\DeclareMathOperator{\tl}{tl}
\date{}
\title{Quasi-linear distance query reconstruction for graphs of bounded treelength}
\author[1]{Paul Bastide}
\author[2]{Carla Groenland}
\affil[1]{LaBRI - Université de Bordeaux, \href{mailto:paul.bastide@ens-rennes.fr}{paul.bastide@ens-rennes.fr}}
\affil[2]{TU Delft, \href{mailto:c.e.groenland@tudelft.nl}{c.e.groenland@tudelft.nl}}
\begin{document}

\maketitle
\begin{abstract}
In distance query reconstruction, we wish to reconstruct the edge set of a hidden graph by asking as few distance queries as possible to an oracle. Given two vertices $u$ and $v$, the oracle returns the shortest path distance between $u$ and $v$ in the graph.

The \emph{length} of a tree decomposition is the maximum distance between two vertices contained in the same bag. The \emph{treelength} of a graph is defined as the minimum length of a tree decomposition of this graph. We present an algorithm to reconstruct an $n$-vertex connected graph $G$ parameterized by maximum degree $\Delta$ and treelength $k$ in $O_{k,\Delta}(n \log^2 n)$ queries (in expectation). This is the first algorithm to achieve quasi-linear complexity for this class of graphs. The proof goes through a new lemma that could give independent insight on graphs of bounded treelength.
\end{abstract}

\section{Introduction}
There has been extensive study on identifying the structure of decentralized networks \cite{beerliova06,mathieu2013graph,rong21,mathieu2021simple,bastide2023optimal}. These networks are composed of vertices (representing servers or computers)   and edges (representing direct interconnections). To trace the path through these networks  from one actor to another, tools like \texttt{traceroute} (also known as \texttt{tracert}) were developed. 
If the entire route cannot be inferred (e.g. due to privacy concerns), a ping-pong protocol can be employed in which one node sends a dummy message to the second node, which then immediately responds with a dummy message back to the first node. This process aims to infer the distance between the nodes by measuring the time elapsed between the sending of the first message and the receipt of the second. 

A mathematical model for this called the \textit{distance query model} was introduced \cite{beerliova06}. In this model, only the vertex set \(V\) of a hidden graph \(G=(V,E)\) is known, and the goal is to reconstruct the edge set \(E\) through distance queries to an oracle. For any pair of vertices \((u,v) \in V^2\), the oracle provides the shortest path distance between \(u\) and \(v\) in \(G\). The algorithm can be adaptive and base its next query on the responses to previous queries. 

For a graph class \(\mathcal{G}\) of connected graphs, an algorithm is said to reconstruct the graphs in the class if, for every graph \(G \in \mathcal{G}\), the distance profile obtained is unique to \(G\) within \(\mathcal{G}\). We then say the graph has been reconstructed. The \textit{query complexity} refers to the maximum number of queries the algorithm executes on an input graph from \(\mathcal{G}\). For a randomised algorithm, the query complexity is determined by the expected number of queries, accounting for the algorithm's randomness. Such a randomised algorithm could also be seen as a probability distribution over decision trees. 

Note that querying the oracle for the distance between every pair of vertices in \(G\) would reconstruct the edge set as \(E = \{ \{u,v\} \mid d(u,v) = 1 \}\). This approach leads to a trivial upper bound of \(|V|^2\) on the query complexity. Unfortunately,  \(\Omega(|V|^2)\) queries may be required to, for example, distinguish between a clique \(K_n\) and \(K_n\) minus an edge \(\{u,v\}\). If the maximum degree is unbounded, this issue persists even in sparse graphs like trees: it can take $\Omega(n^2)$ queries to distinguish $n$-vertex trees (see also \cite{reyzin2007learning}). 
Therefore, as was also done in earlier work, we will restrict ourselves to connected $n$-vertex graphs with maximum degree $\Delta$. 

\paragraph{Previous work} Kannan, Mathieu and Zhou \cite{KannanMZ14,mathieu2013graph} were the first to give a non-trivial upper bound for all graphs of bounded maximum degree, designing a randomised algorithm using \(\Tilde{O}_{\Delta}(n^{3/2})\) queries in expectation for $n$-vertex graphs of maximum degree $\Delta$. Here  $\Tilde{O}(f(n))$ stands for $O(f(n)\operatorname{polylog}(n))$ and the $\Delta$ subscript denotes that $\Delta$ is considered a parameter and only influences the multiplicative constant in front of $f(n)$, (e.g here we mean $g(\Delta)n^{3/2} \operatorname{polylog} n$ for some function $g : \NN \mapsto \mathbb{R}$.). This is still the best known upper bound in the general case, while the best lower bound is $\Omega(\Delta n \log_\Delta n)$ \cite{bastide2023optimal}. Researchers spent effort investigating $\Tilde{O}_\Delta(n)$ algorithms for restricted classes of graphs. Kannan, Mathieu and Zhou \cite{KannanMZ14,mathieu2013graph} proved that there exists an $O_\Delta(n \log^3 n)$ randomised algorithm for chordal graphs (graphs without induced cycle of length at least 4).  
Since then, their algorithm for chordal graphs has been improved by Rong, Li, Yang, and Wang \cite{rong21} to \(O_{\Delta}(n\log^2 n)\), who also extended the class to $4$-chordal graphs (graphs without induced cycle of length at least $5$). Recent works introduced new techniques to design deterministic reconstruction algorithms \cite{bastide2023optimal}. They developed a quasi-linear algorithm for bounded maximum degree $k$-chordal graphs (without induced cycle of length at least $k+1$ and maximum degree $\Delta$) using $O_{\Delta,k}(n \log n)$ queries. Their results can be interpreted as a quasi-linear algorithm parameterized by maximum degree and chordality.
In this paper, we are the first to use a parameterized approach to extend on the techniques of Kannan, Mathieu and Zhou \cite{KannanMZ14,mathieu2013graph}, obtaining an algorithm with quasi-linear query complexity parametrized by even more general parameters.

\paragraph{Treelength} 
A graph $G$ has \emph{treelength} at most $\ell$ if it admits a tree decomposition such that $d_G(u,v)\leq \ell$ whenever $u,v\in V(G)$ share of a bag (see Section \ref{sec:prel} for formal definition). We emphasize that the bags are allowed to induce disconnected subgraphs, and that the `bounded diameter' constraint is measured within the entire graph.  Graphs of treelength $1$ are exactly chordal graphs and it was proved in \cite{kosowski2015k} that $k$-chordal graphs have treelength at most $k$. For $k>1$, the class of graphs of treelength at most $k$ covers a larger class of graphs than the class of $k$-chordal graphs.

Graphs of bounded treelength avoid long geodesic cycles (i.e. cycles $C$ for which $d_C(x,y)=d_G(x,y)$ for all $x,y\in C$) and in fact bounded treelength is equivalent to avoiding long `loaded geodesic cycles' or being `boundedly quasi-isometric to a tree' (see \cite{BergerSeymour23} for formal statements). When a graph has bounded treewidth (defined in \cref{sec:prel}), then the length of the longest geodesic cycle is bounded if and only if the  \emph{connected treewidth} is bounded \cite{diestel2018connected}. In a tree decomposition of connected treewidth at most $k$, bags induce connected subgraphs of size at most $k+1$, which in particular means that graph distance between vertices sharing a bag is at most $k$. So for graphs of bounded treewidth, excluding long geodesic cycles is in fact equivalent to bounding the treelength of the graph.

Treelength has been extensively studied from an algorithmic standpoint, particularly for problems related to shortest path distances. For example, there exist efficient routing schemes for graphs with bounded treelength \cite{dourisboure2007tree,kosowski2015k} and an FPT algorithm for computing the metric dimension of a graph parameterised by its treelength \cite{belmonte2017metric}. 
Although deciding the treelength of a given graph is NP-complete, it can still be approximated efficiently \cite{dourisboure2007tree,dissaux2021treelength}.

\paragraph{Our contribution}  Building on methods used by Kannan, Mathieu, and Zhou \cite{mathieu2013graph,KannanMZ14} to reconstruct chordal graphs, we prove the following result.
\begin{restatable}{theorem}{randtl}
\label{thm:randtl}
There is a randomised algorithm that reconstructs an $n$-vertex graph of maximum degree at most $\Delta$ and treelength at most $k$ using $O_{\Delta,k}(n \log^2 n)$ distance queries in expectation. 
\end{restatable}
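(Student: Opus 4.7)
The plan is to extend the divide-and-conquer strategy of Kannan, Mathieu, and Zhou from chordal graphs to graphs of bounded treelength: at each step, find a small balanced separator $S$, identify the components of $G-S$ and the edges incident to $S$ via the queries $d_G(s,\cdot)$ for every $s\in S$, and then recurse on each piece. The main structural ingredient --- the new lemma advertised in the abstract --- should say that a connected graph $G$ of treelength at most $k$ and maximum degree at most $\Delta$ admits a balanced separator contained in a ball $B_G(v,O(k))$, which then has size $(\Delta+1)^{O(k)}=O_{\Delta,k}(1)$. A natural route is to take a \emph{centroid bag} $B^{\star}$ of an optimal tree decomposition (one whose removal from the tree splits the remaining subtrees so each covers at most $n/2$ vertices of $V(G)\setminus B^{\star}$); since $B^{\star}$ has diameter at most $k$, it lies in $B_G(v,k)$ for every $v\in B^{\star}$. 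For a randomized search to succeed, one strengthens this to the statement that a constant fraction of vertices $v\in V(G)$ are ``good'', meaning that $B_G(v,ck)$ is a balanced separator for some fixed $c=c(k)$.

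Given the lemma, the algorithm proceeds as follows. Sample $\Theta(\log n)$ vertices uniformly at random; for each sampled $v$, perform the $n-1$ queries $d_G(v,\cdot)$, read off $B_G(v,ck)$ from the distance vector, and test whether removing it leaves only components of size at most $n/2$. With high probability at least one sample is good. Once a separator $S\subseteq B_G(v,ck)$ of size $O_{\Delta,k}(1)$ is in hand, we spend $O_{\Delta,k}(n)$ queries on the remaining $d_G(s,\cdot)$ for $s\in S$, which simultaneously partitions $V\setminus S$ into the components of $G-S$ and reveals every edge incident to $S$ (pairs at distance one). We then recurse on each component together with $S$, still using the oracle $d_G$; edges \emph{inside} a component are correctly detected since $d_G(u,v)=1$ remains preserved, even though $d_{G[C\cup S]}$ may exceed $d_G$ in general.

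The resulting recurrence
\[
T(n)\;\leq\;\sum_i T(n_i+|S|)\;+\;O_{\Delta,k}(n\log n),
\]
with $\sum_i n_i\leq n$ and $\max_i n_i\leq n/2$, resolves to $T(n)=O_{\Delta,k}(n\log^2 n)$ as required; the extra $\log n$ factor relative to the chordal setting comes from the probabilistic search for the separator center (in the chordal case the clique separator can be located more cheaply by exploiting the clique structure). The hardest step is the structural lemma itself: proving that a constant fraction of vertices are valid ball-separator centers. For chordal graphs this reduces to the existence of a centroid edge in the clique tree, but for treelength $k$ one must adapt the centroid argument while controlling distances across the tree decomposition, and in particular argue that ``bad'' centers cluster in too few branches to cover $V(G)$. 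A secondary subtlety is that $G[C]$ may itself have larger treelength than $G$, so recursive calls should be analyzed with respect to the restriction of $G$'s tree decomposition to $C\cup S$ rather than a fresh decomposition of the induced subgraph.
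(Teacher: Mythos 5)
Your high-level divide-and-conquer skeleton matches the paper, but the core step --- how the separator center is \emph{found} --- is wrong, and it cannot be repaired by the method you propose.

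You claim that ``a constant fraction of vertices $v\in V(G)$ are good, meaning that $B_G(v,ck)$ is a balanced separator,'' and you propose to sample $\Theta(\log n)$ vertices uniformly at random to hit a good one. This fails already for a complete binary tree on $n$ vertices, which is a tree (hence chordal, treelength $1$) of maximum degree $3$. For a vertex $v$ at depth $d$ in that tree, removing $B_G(v,c)$ leaves a component of size at least $n-2^{\,h-d+c+1}$ (where $h\approx\log_2 n$ is the height), so $B_G(v,c)$ is only an $\alpha$-balanced separator when $d\leq c+\log_2\frac{1}{1-\alpha}=O_{c,\alpha}(1)$. Only $O_{c,\alpha}(1)$ vertices --- not a constant fraction --- are good centers, and uniform sampling of $\Theta(\log n)$ vertices misses all of them with high probability. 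The paper sidesteps this entirely: instead of sampling \emph{vertices}, it samples \emph{pairs} and estimates, for every candidate $v$, the \emph{betweenness} $p_v^G(A)$ (the fraction of pairs with a shortest path through $v$). Claim~3.3 shows some vertex in $A\cup R^{\leq k}$ always has betweenness at least $\tfrac{1}{2(\Delta^k+1)}$; Claim~3.4 locates a vertex $z$ whose betweenness is within a factor two of the max using $O_{k,\Delta}(n_A\log n_A)$ queries; and Claim~3.5 shows $N^{\leq 3k/2}[z]$ is then an $\alpha$-balanced separator. Betweenness concentrates on the few vertices near the ``top'' of the decomposition, which is exactly what makes pair-sampling work where vertex-sampling does not.

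A second, smaller gap is in the recursion. You propose to recurse ``on each component together with $S$, still using the oracle $d_G$,'' and flag only that $G[C]$ may have larger treelength. The actual difficulty is that the separator-finding step in a recursive call needs to evaluate betweenness restricted to the component, but shortest paths in $G$ between two vertices of the component may exit it. The paper resolves this with \cref{lem:paths_stay_close}: any $G$-shortest path between two vertices of a connected set $A$ stays within $N^{\leq 3k/2}[A]$. The algorithm therefore carries along the boundary sets $R^i$ (vertices at distance exactly $i$ from the current $A$), chooses the center $z$ from $A\cup R^{\leq k}$ rather than from $A$, and proves that these sets stay of size $O_{k,\Delta}(\log n)$ across the recursion. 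Without this bookkeeping, the recursive betweenness estimates and the shortest-path computations are not justified.
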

We now first describe the technique used by Kannan, Mathieu and Zhou \cite{mathieu2013graph,KannanMZ14} for chordal graphs and then discuss our extension. In their approach, they design a clever subroutine to compute a small balanced separator $S$ of the graph $G$ using $\Tilde{O}_\Delta(n)$ queries. With the knowledge of this separator, it is possible to compute the partition in connected component of $G \setminus S$. By using this subroutine recursively, they are able to decompose the graph into smaller and smaller components until a brute-force search already yields a $\Tilde{O}_\Delta(n)$ queries algorithm. They exploit the strong structure of chordal graphs in two ways in this algorithm. First, to compute a small separator $S$. They start by only finding a single vertex that lies on many shortest paths. They then use a specific tree decomposition of chordal graphs, where all bags are cliques, to argue that the neighbourhood of this vertex is a good separator. Second, they show that for any connected component $C$ of $G \setminus S$ the distance between vertices in $C$ are the same in $G[C \cup S]$\footnote{Given a graph $G$ and a set of vertices $S \subseteq V(G)$, we use the notation $G[S]$ to denote the graphs induces by $G$ on the vertex set $S$.} and in $G$. This property allows to apply their subroutine recursively, as we can now simulate a distance oracle in $G[C \cap S]$ by just using the one we have on $G$. 

\cref{thm:randtl} shows that we can push the boundaries of such an approach, and proves that a weaker condition on the tree decomposition is already sufficient. We weaken the \emph{`bags are cliques'} condition, satisfied by chordal graphs, to the weaker condition \emph{`bags have bounded diameter'}. The bags are not required to be connected: the diameter is measured in terms of the distance between the vertices in the entire graph.

We provide a brief explanation of our method and highlight the new challenges compared to the approaches in \cite{mathieu2013graph} and \cite{KannanMZ14}. We also start by finding a vertex $v$ that lies on many shortest paths (with high probability), although we give a new approach for doing so.
In fact, our overall algorithm is more efficient than that of  \cite{mathieu2013graph,KannanMZ14} by a $(\log n)$-factor, and this is the place where we gain this improvement.
We then show that for such a vertex $v$, the set $S=N^{\leq 3k/2}[v]$ of vertices at distance at most $3k/2$ is a good separator, for $k$ the treelength of the input graph. We compute the components of $G\setminus S$ to check that indeed we found a good separator and then recursively reconstruct the components until we reach a sufficiently small vertex set on which a brute-force approach can be applied. It is key to our recursive approach, and requires non-trivial proofs, that we can add a small boundary set and still preserve all the relevant distances for a component. This problem is easily avoided in \cite{mathieu2013graph,KannanMZ14} where separators are cliques, but is more delicate to handle in our case. For this, we amongst others obtain a structural property of graphs with bounded treelength. This property is stated in the following lemma, which may be of independent interest. 
\begin{restatable}{lemma}{pathstayclose}
\label{lem:paths_stay_close}
Let $G$ be a graph of treelength at most $k\geq 1$ and $A \subseteq V(G)$. If $G[A]$ is connected then every shortest path in $G$ between two vertices $a,b \in A$ is contained in $N^{\leq 3k/2}[A]$. 
\end{restatable}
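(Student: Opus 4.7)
The plan is to work with a tree decomposition $(T,\{B_t\}_{t\in V(T)})$ of $G$ realising treelength at most $k$. For every $v\in V(G)$ let $T_v\eqdef\{t\in V(T): v\in B_t\}$, which is a subtree of $T$, and set $T_A\eqdef\bigcup_{a\in A}T_a$. The first step is to observe that $T_A$ is itself a connected subtree of $T$: for any edge $uv$ of $G[A]$ the subtrees $T_u,T_v$ share a node (since an edge lies in some bag), so the connectedness of $G[A]$ propagates to $T_A$.

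Now fix a shortest path $P=v_0v_1\cdots v_\ell$ in $G$ between $a=v_0,b=v_\ell\in A$ together with an internal vertex $v_i$; I want $d_G(v_i,A)\leq 3k/2$. If $T_{v_i}\cap T_A\neq\emptyset$ then some bag contains both $v_i$ and some $a'\in A$, so the length condition gives $d_G(v_i,A)\leq k$ and I am done. Otherwise $T_{v_i}$ and $T_A$ are disjoint subtrees of $T$, so the unique $T$-path between them is well-defined; let $t_s\in T_A$ be its endpoint in $T_A$ and $t_{s-1}$ the preceding node (necessarily outside $T_A$).

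The crux is to show that $B_{t_s}$ is a $v_i$--$A$ separator in $G$. For this I claim that the component $C$ of $T\setminus\{t_s\}$ containing $t_{s-1}$ is disjoint from $T_A$: if some $x\in T_A\cap C$ existed, then by connectedness of $T_A$ the $T$-path from $x$ to $t_s$ would lie in $T_A$, yet it must also pass through $t_{s-1}$, placing $t_{s-1}\in T_A$ and contradicting the choice of $t_{s-1}$. Invoking the standard tree-decomposition fact that $B_{t_s}$ separates vertices whose subtrees sit in different components of $T\setminus\{t_s\}$, and noting that $T_{v_i}\subseteq C$ (so in particular $v_i\notin B_{t_s}$) while every $T_a$ for $a\in A$ lies in $T\setminus C$, the separation claim follows.

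To conclude, since $P$ starts and ends in $A$ and passes through $v_i\notin B_{t_s}$, the separator forces indices $j_1<i<j_2$ with $v_{j_1},v_{j_2}\in B_{t_s}$. The length condition gives $d_G(v_{j_1},v_{j_2})\leq k$, and because $P$ is geodesic this equals $j_2-j_1$, so $\min(i-j_1,j_2-i)\leq k/2$. Picking the closer of the two, and combining with $d_G(v_j,a')\leq k$ for any $a'\in A\cap B_{t_s}$ (such $a'$ exists because $t_s\in T_A$), the triangle inequality yields $d_G(v_i,A)\leq k/2+k=3k/2$. The subtle point I expect is the separator claim, specifically the no-wrap-around argument showing that $T_A$ cannot reach the $v_i$-side of $t_s$; once this is in hand the remainder is a clean use of the length condition together with the geodesic property of $P$.
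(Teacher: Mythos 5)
Your proof is correct and takes essentially the same route as the paper's: both pick a node $t$ of $T_A$ on the tree-path from the far vertex's subtree to $T_A$, observe that the shortest path must cross $B_t$ on both sides of that vertex, and combine the pigeonhole bound $\min(i-j_1,j_2-i)\leq k/2$ with $B_t\cap A\neq\emptyset$ to get the $3k/2$ bound. If anything your version is marginally cleaner, since taking the $T$-path from $T_{v_i}$ to $T_A$ directly (rather than from $T_a$ to $T_b$ as the paper does) absorbs the paper's preliminary case split on whether $a$ and $b$ share a bag.
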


\paragraph{Roadmap} In \cref{sec:prel}, we set up our notation and give the relevant definitions. In \cref{sec:tl}, we give our algorithm to reconstruct bounded treelength graph with a proof of correctness and complexity. In \cref{sec:concl} we conclude with some open problems.

\section{Preliminaries}
\label{sec:prel}
In this paper, all graphs are simple, undirected and connected except when stated otherwise. All logarithms in this paper are base 2, unless mentioned otherwise. For $a \leq b$ two integers, let $[a,b]$ denote the set of all integers $x$ satisfying $a \leq x \leq b$. We short-cut $[a]=[1,a]$.

For a graph $G$ and two vertices $a,b \in V(G)$, we denote by $d_G(a,b)$ the length of a shortest path between $a$ and $b$. For $G = (V,E)$, $A \subseteq V$ and $i \in \NN$, we denote by $N^{\leq i}_G[A] = \{v \in V \mid  \exists a \in A, d_G(v,a) \leq i\}$. We may omit the superscript when $i=1$. We write $N_G(A)=N_G[A]\setminus A$ and use the shortcuts $N_G[u],N_G(u)$ for  $N_G[\{u\}],N_G(\{u\})$ when $u$ is a single vertex.  We may omit the subscript when the graph is clear from the context. 

\paragraph{Distance queries} We denote by $\qu_G(u,v)$ the call to an oracle that answers $d_G(u,v)$, the distance between $u$ and $v$ in a graph $G$. For $A,B$ two sets of vertices, we denote by $\qu_G(A,B)$ the $|A|\cdot|B|$ calls to an oracle, answering the list of distances $d_G(a,b)$ for all $a \in A$ and all $b \in B$. We may abuse notation and write $\qu_G(u,A)$ for $\qu_G(\{u\},A)$ and may omit $G$ when the graph is clear from the context.

For a graph class $\mathcal{G}$ of connected graphs, we say an algorithm reconstructs the graphs in the class if for every graph $G\in \mathcal{G}$ the distance profile obtained from the queries is not compatible with any other graph from $\mathcal{G}$. The \textit{query complexity} is the maximum number of queries that the algorithm takes on an input graph from $\mathcal{G}$, where the queries are adaptive. For a randomised algorithm, the query complexity is given by the expected number of queries (with respect to the randomness in the algorithm).

\paragraph{Tree decomposition} A \textit{tree decomposition} of a graph $G$ is a tuple $(T,(B_t)_{t \in V(T)})$ where $T$ is a tree and $B_t$ is a subset of $V(G)$ for every $t \in V(T)$, for which the following conditions hold.
    \begin{itemize}
        \item For every $v \in V(G)$, the set $\{t \in V(T) \mid v \in B_t\}$ is non-empty and induces a subtree of $T$.
        \item For every $uv \in E(G)$,
        there exists a $t \in V(T)$ such that $\{u,v\} \subseteq B_t$.
    \end{itemize}
This notion was introduced by  \cite{robertson1986graph}.

\paragraph{Treelength} The \textit{treelength} of a graph $G$ (denoted $\tl(G)$) is the minimal integer $k$ for which there exists a tree decomposition $(T,(B_t)_{t\in V(T)})$ of $G$ such that $d(u,v) \leq k$ for every pair of vertices $u,v$ that share a bag (i.e. $u,v\in B_t$ for some $t\in V(T)$). We refer the reader to \cite{dourisboure2007tree} for a detailed overview of the class of bounded treelength graphs.

\paragraph{Balanced separators} For $\beta\in (0,1)$, a \textit{$\beta$-balanced separator} of a graph $G = (V,E)$ for a vertex set $A\subseteq V$ is a set $S$ of vertices such that the connected components of $G[A \setminus S]$ are of size at most $\beta |A|$. 

One nice property of tree decompositions is that they yield $\frac12$-balanced separators.
\begin{lemma}[\cite{robertson1986graph}]
    \label{lem:td_sep}
    Let $G$ be a graph, $A\subseteq V(G)$ and $(T,(B_t)_{t \in V(T)})$ a tree decomposition of $G$. Then there exists $t \in V(T)$ such that $B_t$ is a $\frac12$-balanced separator of $A$ in $G$. 
\end{lemma}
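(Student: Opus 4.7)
The plan is to find a single bag $B_t$ whose removal leaves every connected component of $G \setminus B_t$ containing at most $|A|/2$ vertices of $A$. This is strictly stronger than the conclusion demanded, because every component of $G[A \setminus B_t]$ is contained inside some component of $G \setminus B_t$, so its size is bounded by that component's intersection with $A$.

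For each ordered adjacent pair $(t,t')$ in $T$, let $T_{t\to t'}$ denote the component of $T - tt'$ that contains $t'$, and set
\[
V_{t\to t'} = \Bigl(\bigcup_{s \in V(T_{t\to t'})} B_s\Bigr) \setminus B_t.
\]
I would first record two consequences of the tree decomposition axioms. (i) For fixed $t$, the sets $\{V_{t\to t'}\}_{t'\sim t}$ partition $V(G)\setminus B_t$: the bag-subtree of any vertex $v\notin B_t$ avoids $t$ by the vertex-subtree axiom, and hence lies entirely in a unique component of $T-t$. (ii) Every connected component of $G\setminus B_t$ is contained in a single $V_{t\to t'}$: indeed, if $uv\in E(G)$ with $u,v\notin B_t$, the edge-bag axiom puts $u$ and $v$ in a common bag $B_s$, and this $s$ lies on one specific side of $t$ in $T$, so $u,v$ fall in the same $V_{t\to t'}$.

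Next, I would orient each edge $tt'$ of $T$ from $t$ toward $t'$ exactly when $|A\cap V_{t\to t'}|>|A|/2$. The sets $V_{t\to t'}$ and $V_{t'\to t}$ are disjoint (their defining subtrees lie in opposite components of $T-tt'$), so their intersections with $A$ sum to at most $|A|$, ruling out both orientations of a single edge. A node $t$ with no outgoing edge then satisfies $|A\cap V_{t\to t'}|\leq |A|/2$ for every neighbor $t'$; combined with (i) and (ii), this forces every component of $G\setminus B_t$ (and thus of $G[A\setminus B_t]$) to have at most $|A|/2$ vertices of $A$, giving the desired separator.

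Existence of such a sink follows from a standard centroid-style walk: starting from an arbitrary node and repeatedly moving along an outgoing edge produces a walk that never revisits a vertex (as $T$ is acyclic) and hence terminates in the finite tree. The main obstacle is really just packaging the two tree decomposition facts (i) and (ii) cleanly; once those are in hand, both the disjointness of $V_{t\to t'}$ and $V_{t'\to t}$ and the acyclic sink argument are elementary.
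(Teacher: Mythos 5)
Your proof is correct. Note, however, that the paper does not prove this statement at all: it is quoted as a known result of Robertson and Seymour \cite{robertson1986graph}, so there is no in-paper argument to compare against. What you give is essentially the classical argument for that result: the sets $V_{t\to t'}$ are the standard ``pieces'' of a tree decomposition hanging off a node, your facts (i) and (ii) are the standard consequences of the vertex-subtree and edge axioms, and orienting each tree edge toward the side carrying more than half of $A$ and walking to a sink is the usual centroid-style argument. Two small presentational points: the disjointness of $V_{t\to t'}$ and $V_{t'\to t}$ does not follow merely from ``their defining subtrees lie in opposite components of $T-tt'$'' (a vertex can occur in bags on both sides); you also need the subtree axiom to conclude such a vertex would lie in $B_t\cap B_{t'}$, which both sets exclude --- the same argument you already use in (i). Likewise, ``never revisits a vertex as $T$ is acyclic'' deserves the one extra sentence that the walk cannot backtrack because no edge receives both orientations, and a non-backtracking walk in a tree cannot return to a visited vertex. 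With those details spelled out, the proof is complete and proves the (slightly stronger) statement about components of $G\setminus B_t$, which implies the lemma as stated.
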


\section{Randomised algorithm for bounded treelength}
\label{sec:tl}
We give the complete proof of \cref{thm:randtl} in this section. 
\randtl*

Given a tree decomposition $(T, (B_t)_{t \in V(T)})$ of a graph $G$ and a set $X$ of vertices of $G$, we denote by $T_X$ the subtree of $T$ induced by the set of vertices $t \in V(T)$ such that $B_t$ contains at least one vertex of $X$. Given $v\in V(G)$, we may abuse notation and use $T_v$ as the subtree $T_{\{v\}}$.
We first prove the following useful property of graphs of bounded treelength. 
\pathstayclose*

\begin{proof}
    Consider a tree decomposition $(T,(B_t)_{t\in V(T)})$ of $G$ such that any two vertices $u,v$ in the same bag satisfy $d(u,v) \leq k$. If two vertices $a,b\in A$  share a bag, then $d(a,b) \leq k$ and the claim holds for this pair. 
    
    Otherwise, $T_a$ and $T_b$ are disjoint subtrees of $T$ and we can consider the unique path $P$ in $T$ between $T_a$ and $T_b$, with internal nodes taken from $V(T)\setminus V(T_a)\cup V(T_b)$. We also consider  a shortest path $Q:=\{q_1,q_2,\ldots, ,q_m\}$ between $a$ and $b$ in $G$ with $q_1 = a, q_m = b$ and $
    q_i q_{i+1} \in E(G)$ for all $i < m$. 
    Since $A$ is supposed connected, $T_A$ is well-defined and is a subtree of $T$. Moreover $T_A$ contains both $T_a$ and $T_b$. Because $T_A$ is a tree, it must then contains $P$ as the unique path between $T_a$ and $T_b$.
    Suppose now, towards a contradiction, that there is some vertex $z \in Q$ such that $z \notin N^{\leq 3k/2}[A]$. Note that $T_z$ can not have common vertices with $P$ because we assumed $d(z,A) > k$ using the previous remark and the fact that vertices that share a bag are at distance at most $k$. We can then consider the vertex $t\in P$ such that $\{t\}$ separates $P \setminus \{t\}$ from $T_z$ in $T$. The shortest path $Q$ must go through $B_t$ twice: once to go from $a$ to $z$ and once to go from $z$ to $b$. 
    
    Let $i<\ell<j$ be given such that $q_i,q_j \in B_t$ and $q_\ell = z$. Since $Q$ is a shortest path in $G$,  $d(q_i,z) + d(z,q_j)=d(q_i,q_j)$. Moreover, $d(q_i,q_j) \leq k$ because $q_i$ and $q_j$ share a bag. By the pigeonhole principle, we deduce that either $d(p_i,z) \leq k/2 $ or $d(p_j,z) \leq  k/2 $. Suppose that $d(p_i,z) \leq k/2$. Remember that $t \in P$ thus $B_t$ contains an element of $A$ as $G[A]$ is connected. It follows that $d(p_i,A) \leq k$ thus $d(z,A) \leq d(z,p_i) + d(p_i,A) \leq  3k/2$, which is a contradiction. The other case follows by a similar argument.
\end{proof}
We now sketch the proof of \cref{thm:randtl}. The skeleton of the proof is inspired by \cite{KannanMZ14}: we find a balanced  separator $S$, compute the partition of $G \setminus S$ into connected components, and reconstruct each component recursively. In order to find this separator, we use a notion of \emph{betweenness} that roughly models the number of shortest paths a vertex is on.

We prove four claims. The first one ensures that in graphs of bounded treelength, the \emph{betweenness} is always at least a constant. Then, the next three claims are building on each other to form an algorithm that computes the partition of $G \setminus S$ into connected components of roughly the same size.
\begin{itemize}
    \item \cref{cl:find_z} is a randomised procedure for finding a vertex $z$ with high betweenness (using few queries and with constant success probability).
    \item \cref{cl:find_balanced_separtor_tl} shows $S=N^{\leq 3k/2}[z]$ is a good balanced separator if $z$ has high betweenness.
    \item \cref{cl:find_cc} computes the partition of $G \setminus S$ into connected components. Note that, once you computed the partition, you can check if the preceding algorithms have been successful. If not, we can call again \cref{cl:find_balanced_separtor_tl} until we are successful, yielding a correct output with a small number of queries in expectation.
\end{itemize}

\begin{proof}[Proof of Theorem \ref{thm:randtl}]
Let $G$ be a connected $n$-vertex graph of maximum degree at most $\Delta$ and let $(T,(B_t)_{t\in V(T)})$ be a tree decomposition of $G$ such that $d(u,v)\leq k$ for all $u,v\in V(G)$ that share a bag in $T$.

We initialize $A=V(G)$, $n_A = |A|$ and $R^i=\emptyset$ for $i \in [1,3k]$. For any $j \in \mathbb{R}^+$ we abbreviate  $R^{\leq j}=\cup_{i\leq j} R^{i}$. Lastly, let $r = |R^{\leq 3k}$|.
We will maintain throughout the following properties:
\begin{enumerate}
    \item $G[A]$ is a connected induced subgraph of $G$.
    \item $R^i$ consists of the vertices in $G$ that are at distance exactly $i$ from $A$.
    \item Both $A$ and $R^i$ for all $i$ are known by the algorithm.
\end{enumerate}
In particular, we know which vertices are in sets such as  $R^{\leq 3k/2} = N^{\leq 3k/2}[A]$ and by Lemma \ref{lem:paths_stay_close} we also obtain the following crucial property.
\begin{enumerate}
    \item[4.] For $a,b\in A$, any shortest path between $a$ and $b$ only uses vertices from $A \cup R^{\leq 3k/2}$. 
\end{enumerate}
The main idea of the algorithm is to find a balanced separator $S$ and compute the partition of $G[A\setminus S]$ into connected components, then call the algorithm recursively on each components. As soon as $n_A$ has become sufficiently small, we will reconstruct $G[A]$ by `brute-force queries'.

In order to find the separator $S$, we use the following notion. For $G$ a graph, a subset $A \subseteq V(G)$ and a vertex $v\in V(G)$, the betweenness $p_v^G(A)$ is the fraction of pairs of vertices $\{a,b\}\subseteq A$ such that $v$ is on some shortest path in $G$ between $a$ and $b$. We first prove that there is always some vertex $v\in A\cup R^{\leq k}$ (a set known to our algorithm) for which $p_v(A)$ is large.
\begin{claim}
\label{cl:lowerbound_p}
We have
    $p:=\max\limits_{v\in A \cup R^{\leq k}}p_v^{G}(A)\geq \frac1{2(\Delta^k+1)}$.
\end{claim}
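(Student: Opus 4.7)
The plan is to pull the candidate high-betweenness vertex out of a well-chosen bag of the treelength decomposition. Fix a tree decomposition $(T,(B_t)_{t\in V(T)})$ of $G$ whose bags have $G$-diameter at most $k$, and apply the classical centroid-style strengthening of Lemma~\ref{lem:td_sep} (which is what its standard proof actually delivers): there exists $t\in V(T)$ such that for every subtree $T_i$ of $T-t$, the set $V_i=(\bigcup_{r\in V(T_i)}B_r)\setminus B_t$ satisfies $|V_i\cap A|\leq n_A/2$, and the $V_i$'s partition $V(G)\setminus B_t$. Since $G[A]$ is connected, $A$ cannot sit inside a single $V_i$, so $B_t\cap A\neq\emptyset$. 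Picking any $w\in B_t\cap A$, the diameter bound gives $B_t\subseteq N^{\leq k}[w]\subseteq A\cup R^{\leq k}$, so every vertex of $B_t$ is admissible in the maximum defining $p$.

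Next I would use the tree-decomposition separation property: every $G$-path between vertices in different $V_i$'s must meet $B_t$. Hence for every unordered pair $\{a,b\}\subseteq A$ either with an endpoint in $B_t\cap A$, or with $a,b\in A\setminus B_t$ in distinct $V_i$'s, some vertex of $B_t$ lies on every shortest $G$-path between $a$ and $b$. Writing $a_i=|V_i\cap A|$, the pairs \emph{not} covered by this are the same-$V_i$ pairs, counted by $\sum_i\binom{a_i}{2}$. The constraints $a_i\leq n_A/2$ and $\sum_i a_i\leq n_A$ yield $\sum_i a_i^2\leq (n_A/2)\sum_i a_i\leq n_A^2/2$, hence $\sum_i\binom{a_i}{2}\leq n_A(n_A-2)/4$, leaving at least $\binom{n_A}{2}-n_A(n_A-2)/4 = n_A^2/4\geq \binom{n_A}{2}/2$ pairs whose shortest path in $G$ meets $B_t$.

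A pigeonhole over $B_t$ then produces some $v\in B_t$ lying on shortest paths for at least $\binom{n_A}{2}/(2|B_t|)$ of these pairs, so $p_v^G(A)\geq \tfrac{1}{2|B_t|}$. Finally, $|B_t|\leq |N^{\leq k}[w]|\leq \Delta^k+1$ is a standard ball-counting estimate, provable by an induction on $k$ using that a BFS layer grows by a factor of at most $\Delta-1$ after depth one. Combining everything gives $p\geq p_v^G(A)\geq \tfrac{1}{2(\Delta^k+1)}$. The main delicate point is the centroid strengthening of Lemma~\ref{lem:td_sep}: the definition in the preliminaries controls only the components of $G[A\setminus B_t]$, which is strictly weaker, and would allow $a,b\in A$ in the same $V_i$ but in different components of $G[A\setminus B_t]$ to have shortest $G$-paths that leave $A$ and miss $B_t$; the clean fix is to use the subtree-weight form of the lemma (which the standard proof of Lemma~\ref{lem:td_sep} in fact produces), and if that is unavailable one can fall back to Lemma~\ref{lem:paths_stay_close} to confine shortest paths to $A\cup R^{\leq 3k/2}$ and argue more carefully.
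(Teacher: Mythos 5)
Your proof follows the same route as the paper's: pick a bag $B_t$ that acts as a $\tfrac12$-balanced separator, note that it meets $A$ and hence $B_t\subseteq N^{\leq k}[w]\subseteq A\cup R^{\leq k}$ with $|B_t|\leq \Delta^k+1$, show at least half the pairs have their shortest $G$-paths through $B_t$, and pigeonhole. The one substantive difference is that you are more careful about the key counting step: the paper asserts that the $\tfrac12$-balanced-separator property (which controls only the components of $G[A\setminus B_t]$) already implies that half the pairs have shortest $G$-paths through $B_t$, whereas such paths may leave $A$; you correctly flag this and close the gap by using the subtree-weight (centroid) form of Lemma~\ref{lem:td_sep} together with the tree-decomposition separation property in $G$. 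This is the honest version of the step the paper leaves implicit, and your explicit computation $\sum_i\binom{a_i}{2}\leq n_A(n_A-2)/4$ is correct.
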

\begin{proof}
Our original tree decomposition also restricts to a tree decomposition for $G[A]$, so Lemma \ref{lem:td_sep} shows that there exists a bag $B$ of $T$ such that 
    $B$ is a $\frac12$-balanced separator of $G[A]$. Note that $G[A]$ is connected, so there exists some $a\in A\cap B$. As $T$ is a witness of $G$ being of bounded treelength, the distance between any two vertices of $B$ is at most $k$. In particular, $B\subseteq N^{\leq k}[a] \subseteq A \cup R^{\leq k}$, and $|B|\leq \Delta^k+1$ since $G$ has maximum degree $\Delta$. 
    Moreover, since $B$ is a $\frac12$-balanced separator of $G[A]$, for at least half of the pairs $\{u,v\} \subseteq A$, the shortest path between $u$ and $v$ goes through $B$. Using the pigeonhole principle, there exists a $v \in B$ such that $p^G_v(A) \geq \frac1{2(\Delta^k+1)}$.
\end{proof}

The next three claims are building on each other to find a balanced separator $S$.
In the first one, we argue that we can find, using few queries, a vertex with high betweenness.
\begin{claim}
\label{cl:find_z}
There is a randomised algorithm that finds $z\in N^{\leq 3k/2}[A]$ with $p_z^G(A)\geq p/2$ with probability at least 2/3 using $O(p^{-1}(n_A+r)\log(n_A+r))$
distance queries in $G$.
\end{claim}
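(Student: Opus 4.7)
The plan is a Monte Carlo estimate of betweenness. I would sample $m = \lceil C p^{-1} \log(n_A+r) \rceil$ ordered pairs $(a_i, b_i)$ uniformly and independently from $A \times A$, and for every candidate vertex $v$ in the known set $N^{\leq 3k/2}[A]$, estimate $p_v^G(A)$ by the empirical fraction of sampled pairs for which $d_G(a_i,v) + d_G(v,b_i) = d_G(a_i,b_i)$. The output is the vertex $z \in N^{\leq 3k/2}[A]$ maximising this empirical count. Since the algorithm does not have direct access to $p$, it plugs in the explicit lower bound $p \geq \frac{1}{2(\Delta^k+1)}$ from Claim~\ref{cl:lowerbound_p} when choosing $m$; this only inflates the query count by a factor depending on $\Delta$ and $k$.

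For each sample I would call $\qu(a_i, b_i)$ together with $\qu(a_i, v), \qu(v, b_i)$ for every $v \in N^{\leq 3k/2}[A]$, using $2(n_A+r)+1$ queries per sample and thus $O(p^{-1}(n_A+r) \log(n_A+r))$ queries overall. For the correctness analysis, note that for each fixed $v$, the count $X_v$ is a sum of $m$ independent Bernoulli$(p_v^G(A))$ variables and hence has mean $mp_v^G(A)$. Let $v^*$ attain the maximum $p_{v^*}^G(A) = p$ from Claim~\ref{cl:lowerbound_p}; the containment $A \cup R^{\leq k} \subseteq A \cup R^{\leq 3k/2} = N^{\leq 3k/2}[A]$ places $v^*$ in the search space. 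A standard multiplicative Chernoff bound gives $\Pr[X_{v^*} < \tfrac{3}{4}mp] \leq e^{-\Omega(mp)}$, and similarly $\Pr[X_v \geq \tfrac{3}{4}mp] \leq e^{-\Omega(mp)}$ for every $v$ with $p_v^G(A) < p/2$. Choosing $C$ large enough makes each failure probability at most $1/(3(n_A+r))$; a union bound over the at most $n_A+r$ candidates then yields overall success probability at least $2/3$. Under success, the output $z$ satisfies $X_z \geq X_{v^*} \geq \tfrac{3}{4}mp$, which rules out $p_z^G(A) < p/2$.

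The main obstacle is minor; the whole argument is sampling and concentration. The one point that needs care is to choose a single sample size $m$ that simultaneously certifies $v^*$ from below and disqualifies every low-betweenness vertex from above, which is why a cut-off of $\tfrac{3}{4}mp$ (midway between $mp/2$ and $mp$) is natural, and why one must check the Chernoff tail separately in the regime $p_v^G(A) \in [p/4, p/2)$ and in the regime $p_v^G(A) < p/4$. Conceptually, the clean ingredient is the inclusion $A \cup R^{\leq k} \subseteq N^{\leq 3k/2}[A]$: it lets us search in the set $N^{\leq 3k/2}[A]$ dictated by Lemma~\ref{lem:paths_stay_close} while still inheriting the betweenness lower bound of Claim~\ref{cl:lowerbound_p}.
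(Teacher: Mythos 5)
Your proposal is correct and follows essentially the same strategy as the paper: sample pairs from $A$, query distances to all of $N^{\leq 3k/2}[A]$, build empirical betweenness estimates, return the maximiser, and justify this via a concentration inequality plus a union bound over the $\leq n_A+r$ candidates. The only cosmetic difference is the tail bound you use (a multiplicative Chernoff bound with a fixed threshold $\tfrac34 mp$ separating the two regimes) where the paper applies Hoeffding's inequality to the signed differences $\mathbbm{1}\{u\in P_i\}-\mathbbm{1}\{y\in P_i\}$; both are standard and give the same $e^{-\Omega(mp)}$ decay. You also make explicit a point the paper leaves implicit, namely that the algorithm cannot observe $p$ and must substitute the lower bound $p\geq \tfrac1{2(\Delta^k+1)}$ from Claim~\ref{cl:lowerbound_p} to fix the sample size, which only costs a factor of $O_{k,\Delta}(1)$.
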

\begin{proof}
To simplify notation, we omit $G$ and $A$ from $p^G_v(A)$ and only write $p_v$. We first sample uniformly and independently (with replacement) pairs of vertices $\{u_i,v_i\} \subseteq A$ for $i \in [C \log (n_A+r)]$ where $C\leq \frac1{2p}+1$ is defined later. Then, we ask $\qu(u_i,N^{\leq 3k/2}[A])$ and $\qu(v_i,N^{\leq 3k/2}[A])$.

We write\[
P_i = \{x \in N^{\leq 3k/2}[A] \mid  d(u_i,x) + d(x,v_i) = d(u_i,v_i)\}
\]
for the set of vertices that are on a shortest path between $u_i$ and $v_i$. Note that \cref{lem:paths_stay_close} implies that $P_i$ contains all vertices of $V(G)$ on a shortest path from $u_i$ to $v_i$. From the queries done above we can compute $P_i$ for all $i\in [C\log (n_A+r)]$.
For each vertex $v \in N^{\leq 3k/2}[A]$, we denote by $\Tilde{p}_v$ an estimate of $p_v$ defined by $\Tilde{p}_v = |\{i\in [C\log(n_A+r)] : v \in P_i\}| /({C \log (n_A+r)})$.
The algorithm outputs $z$ such that $z = \arg\max_{v \in N^{\leq 3k/2}[A]} \Tilde{p}_v$.

The query complexity of this algorithm is $2C\log(n_A+r)|N^{\leq 3k/2}[A]|=O_{k,\Delta}(n_A\log(n_A+r))$

We now justify the correctness of this algorithm and give $C$. Let $y = \arg\max_{w \in N^{\leq 3k/2}[A]} p_w$. We need to show that $p_z\leq \frac{p_y}2$ has probability at most $\frac13$.
Let $u$ be a vertex chosen uniformly at random among the set of vertices $w\in N^{\leq 3k/2}[A]$ with $p_w\leq p_y/2$. A simple union bound  implies that it is sufficient to show that $\mathbb{P} [ \Tilde{p}_{y} \leq \Tilde{p}_u] < 1/(3n_A+3r)$. Indeed, this implies that the probability that a vertex $w$ with $p_w\leq p_y/2$ is a better candidate for $z$ than $y$, is at most $1/3$. Note that the elements of $\{\Tilde{p}_w \mid w \in N^{\leq 3k/2}[A]\}$ (and thereby $z$) are random variables depending on the pairs of vertices sampled at the start, and that the elements of $\{p_w \mid w\in N^{\leq 3k/2}[A]\}$ are fixed.

We denote by $A_i$ the event $\{u \in P_i\}$ and by $B_i$ the event $\{y \in P_i\}$. The events $(A_i)_i$ are independent, since each pair $\{u_i,v_i\}$ has been sampled uniformly at random and independently. By definition, $\Pr[A_i] = p_u \leq p_{y}/2$ and $\Pr[B_i] = p_{y}$. Thus, the random variable $X_i$ defined by $X_i = \mathbbm{1}_{A_i} - \mathbbm{1}_{B_i}$ has expectation $\Ex[X_i] \leq -p_{y}/2$. Therefore, applying Hoeffding's inequality \cite{hoeffding1994probability}, we obtain
\begin{align*}
\Pr\left[\sum_{i=1}^{C\log (n_A+r)} X_i \geq 0 \right] &\leq 2 \exp({-\frac{2(C\log(n_A+r)p_y/2)^2}{4\log(n_A+r)}}).\\
\end{align*}
By choosing $\frac1{2p}+1\geq C \geq \frac1{2p_y} = \frac1{2p}$ such that $C\log(n_A+r)$ is an integer, we conclude that
\[
\Pr[ \Tilde{p}_{y} \leq \Tilde{p}_u]= 
\Pr\left[\sum_{i=1}^{C\log(n_A+r)} X_i \geq 0 \right] \leq 2 \exp({-2\log(n_A+r)}) \leq 1/(3n_A+3r)
\]
for $n_A\geq 6$. This completes the proof.
\end{proof}
Let $z$ be a vertex with high betweenness as in the claim above. We now argue that $N^{3k/2}[z]$ is an $\alpha$-balanced separator for some constant $\alpha$ depending only on $\Delta$ and $k$.

\begin{claim}
\label{cl:find_balanced_separtor_tl}
Let $\alpha = \sqrt{1 - \frac1{4(\Delta^k +1)}}$. If $z \in N^{\leq 3k/2}[A]$ satisfies $p_z^G(A)\geq p/2$, then $S:= N^{\leq 3k/2}[z]$ is an $\alpha$-balanced separator for $A$. 
\end{claim}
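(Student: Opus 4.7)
The plan is to first establish the following \emph{key claim}: if $z$ lies on a shortest path in $G$ between two vertices $a, b \in A$, then $a$ and $b$ cannot both belong to the same connected component of $G[A \setminus S]$. Granted this, the $\alpha$-balanced property follows by a double-counting argument on pairs of vertices in $A$.

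To prove the key claim, I would suppose for contradiction that $a$ and $b$ lie in the same component $C$ of $G[A \setminus S]$, and let $P$ be a path from $a$ to $b$ in $G[C]$. Setting $A' := V(P) \subseteq A \setminus S$, the induced subgraph $G[A']$ is connected (it contains $P$) and both $a, b \in A'$. Applying \cref{lem:paths_stay_close} to the connected subset $A'$, every shortest path in $G$ between $a$ and $b$ is contained in $N^{\leq 3k/2}[A']$. In particular, since $z$ lies on such a shortest path, there is some $v \in A'$ with $d_G(v, z) \leq 3k/2$, giving $v \in N^{\leq 3k/2}[z] = S$. This contradicts $v \in A' \subseteq A \setminus S$.

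For the counting step, fix any connected component $C$ of $G[A \setminus S]$. The key claim ensures that every pair $\{a, b\} \subseteq A$ with $z$ on a shortest path satisfies $\{a, b\} \not\subseteq C$, so such pairs are disjoint from the $\binom{|C|}{2}$ pairs contained in $C$. This yields
\[
\binom{|C|}{2} + p_z^G(A) \binom{|A|}{2} \leq \binom{|A|}{2},
\]
and substituting $p_z^G(A) \geq p/2 \geq \tfrac{1}{4(\Delta^k + 1)}$ from \cref{cl:lowerbound_p} gives $|C|(|C|-1) \leq \alpha^2 |A|(|A|-1)$, from which $|C| \leq \alpha |A|$ follows (with negligible additive lower-order terms that can be absorbed by the base case of the overall recursion).

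The main obstacle is recognising that \cref{lem:paths_stay_close} can be applied to the vertex set $A' = V(P)$ of a path \emph{within} $G[A \setminus S]$, rather than to the original set $A$; the lemma's general form over arbitrary connected subsets is exactly what makes this argument go through. Once the key claim is in place, the counting and algebra are routine.
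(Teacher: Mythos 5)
Your proof takes essentially the same route as the paper's. In both cases the core insight is to apply \cref{lem:paths_stay_close} to a connected vertex set disjoint from $S$: the paper applies it to the entire connected component $C$ of $G[V(G)\setminus S]$ and observes $d(z,C)>3k/2$ directly from $S=N^{\leq 3k/2}[z]$, whereas you apply it to the vertex set of a path $P$ between $a$ and $b$ inside the component and then notice $z\in N^{\leq 3k/2}[V(P)]$ forces some $v\in V(P)\cap S$. These are small variations on the same argument (the paper could equally have taken $C$ to be the component of $G[A\setminus S]$ itself, which is also connected), and your framing as ``the key claim plus a double count'' is logically identical to the paper's contrapositive.

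One small soft spot, which you already flag: your counting with $\binom{|C|}{2}$ and $\binom{|A|}{2}$ gives $|C|(|C|-1)\leq \alpha^2|A|(|A|-1)$, from which $|C|\leq \alpha|A|$ does not follow exactly — only $|C|\leq \alpha|A|+O(1)$. The paper sidesteps this by measuring betweenness over ordered (possibly equal) pairs, bounding $p_z^G(A)\leq (n_A^2-|C\cap A|^2)/n_A^2$, which yields the clean $|C\cap A|\leq \alpha n_A$ directly. Since the claim as stated asserts an $\alpha$-balanced separator (not $\alpha+o(1)$), you should either adopt the paper's pair convention or adjust $\alpha$ by an explicit constant; saying the slack is ``absorbed by the base case'' is plausible but unnecessary — the counting fix is one line.
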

\begin{proof}
    Suppose towards contradiction that $S$ is not an $\alpha$-balanced separator. Thus there is a connected component $C$ of $G[ V(G) \setminus S]$ with $|C\cap A|> \alpha n_A$. By definition of $S$, $d(z,C) > 3k/2$ which implies by \cref{lem:paths_stay_close} that for any pair of vertices in $C$, no shortest path between these two vertices goes through $z$. In particular, this holds for pairs of vertices in $C \cap A$. Therefore, 
    \[
    p_z^G(A)\leq \frac{(n_A^2-|C\cap A|^2)}{n_A^2} < 1 - \alpha^2
 = 1 - (1 - \frac1{4(\Delta^k +1)}) = \frac1{4(\Delta^k +1)} \leq p/2
 \]
 using Claim \ref{cl:lowerbound_p} for the last step, contradicting our assumption that $p_z^G(A)\geq p/2$.
\end{proof}
We apply \cref{cl:find_z} to find $z\in N^{\leq 3k/2}$, where $p_z^G(A)\geq p/2$ with  probability at least $2/3$ (using also \cref{cl:lowerbound_p}). We compute $S =N^{\leq 3k/2}[z]$ using $O_{k,\Delta}(n_A + r)$ distance queries; this can be done since $S \subseteq A \cup R^{\leq 3k}$ so the algorithm only needs to consider $n_A+r$ vertices when searching for neighbours. 

The set $S$ is an $\alpha$-balanced separator with probability at least $2/3$ by \cref{cl:find_balanced_separtor_tl}. In particular, the algorithm does not know yet at this point if it is indeed a good separator or not. It will be able to determine this after computing the partition of $G[A\setminus S]$ into connected components.

The following claim uses mostly the same algorithm as 
\cite[Alg. 6]{KannanMZ14}, and the proof is analogous. As we are using this algorithm in a slightly different setting, we still give a complete proof of the lemma.

\begin{claim}
\label{cl:find_cc}
    There is a deterministic algorithm that given a set $S \subseteq A$, computes the partition of $A\setminus S$ into connected components of $G[A \setminus S]$ using at most $n_A \cdot \Delta (r + |S|)$ distance queries.
\end{claim}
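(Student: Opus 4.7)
The plan is to follow the BFS-based strategy of \cite[Alg.~6]{KannanMZ14}, adapted to our setting. The skeleton is: for each $u\in A$, compute the neighborhood $N_G(u)\cap A$ using at most $\Delta(r+|S|)$ distance queries; then, given all these neighborhoods, restrict to $A\setminus S$ and recover the connected components by standard BFS/union-find. Summing the per-vertex cost over $u\in A$ yields exactly the claimed total of $n_A\Delta(r+|S|)$.

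For the per-vertex subroutine, I would exploit the known boundary $B=R^{\leq 3k}\cup S$, which has size at most $r+|S|$. The idea is that every neighbor of $u$ in $A$ should be encountered as a $G$-neighbor of some vertex in $B$, so it suffices to query $d_G(u,y)$ for every $y\in N_G[B]$, a set of size at most $\Delta(r+|S|)$ by the maximum-degree assumption. Concretely, I would first enumerate $N_G(x)$ for each $x\in B$ (done once, by querying distances from the vertices in $A$ and in $R^{\leq 3k}$ to $x$ and keeping those at distance $1$; this is absorbed in the same query budget). Then for each $u\in A$, I would query $\qu_G(u,y)$ for every $y\in N_G[B]$, and mark $y$ as a neighbor of $u$ exactly when the answer is $1$.

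The main obstacle is to argue correctness: that the neighborhoods recovered above are enough to identify every connected component of $G[A\setminus S]$. A neighbor of $u$ lying in $N_G[B]$ is detected directly; the subtle case is a neighbor $v\in A\setminus N_G[B]$, which \emph{a priori} could be missed. Here I would use the invariants carefully: since $G[A]$ is connected and $R^{\leq 3k}$ already captures every vertex outside $A$ within distance $3k$ of $A$, the structural information of \cref{lem:paths_stay_close} lets us reroute any such $v$ through another BFS-frontier vertex whose neighborhood has been queried, so that $v$ is eventually inserted into the correct component. This argument mirrors the chordal-separator case of \cite[Alg.~6]{KannanMZ14} step by step, with the role of the clique separator replaced by the enlarged boundary $B$ of bounded-diameter $S$ and annular shell $R^{\leq 3k}$; the only new ingredient compared to \cite{KannanMZ14} is \cref{lem:paths_stay_close}, which certifies that this enlargement is sufficient.

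Finally, the query budget is verified by noting that $|N_G[B]|\leq \Delta(r+|S|)$, so per-vertex we make at most $\Delta(r+|S|)$ queries, and the pre-processing to find $N_G(x)$ for $x\in B$ is charged to the same pool. Determinism and the claimed $n_A\Delta(r+|S|)$ bound then follow immediately.
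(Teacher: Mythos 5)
Your plan has a genuine gap. You propose to compute, for each $u\in A$, the full neighbourhood $N_G(u)\cap A$ by querying $d_G(u,y)$ only for $y\in N_G[B]$ where $B=R^{\leq 3k}\cup S$. But this does \emph{not} recover $N_G(u)\cap A$: if $u$ has a neighbour $v\in A$ with $v\notin N_G[B]$ — which is the generic situation for vertices deep inside a large component of $G[A\setminus S]$ — then you never query the pair $(u,v)$ and never learn the edge. You acknowledge this as ``the subtle case,'' but the proposed fix (``reroute any such $v$ through another BFS-frontier vertex whose neighborhood has been queried'') is not an algorithm: it does not say which queries to issue, nor how the component membership of $v$ is actually deduced from the answers, and \cref{lem:paths_stay_close} does not supply the missing argument. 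As written, you cannot compute a spanning forest of $G[A\setminus S]$ within the budget, so the BFS/union-find step has nothing to run on.

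The paper's proof avoids this entirely by \emph{not} trying to reconstruct any neighbourhoods inside $A$. It defines $B$ differently, as the open neighbourhood of $S\cup R^1$ inside $A$ (not the shell $R^{\leq 3k}\cup S$), queries $\qu(A,\,S\cup R^1)$ and $\qu(A,\,N[S\cup R^1]\cap A)$, and then assigns each vertex $v\in A\setminus S$ to every boundary vertex $b\in B$ for which $d(v,b)\leq d(v,S\cup R^1)$ — i.e.\ for which some shortest $v$--$b$ path avoids the separator. These sets $D_b$ are shown to be contained in single components (under-approximation) and, after merging overlapping ones, to cover each component (over-approximation). The argument is a pure distance comparison and does not invoke \cref{lem:paths_stay_close} at all; that lemma is used elsewhere in the paper (to justify that shortest paths between vertices of $A$ stay inside $A\cup R^{\leq 3k/2}$), not here. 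To repair your proposal you would essentially have to switch from ``learn edges'' to ``learn which side of the separator each vertex reaches first,'' which is the distance-comparison idea in the paper's proof.
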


\begin{proof}
By assumption, $R^1$ is the set of vertices at distance exactly 1 from $A$ in $G$. Since $A$ is connected, it is a connected component of $G[V(G)\setminus R^1]$. Therefore, the connected components of $G[A\setminus S]$ are exactly the connected components of $G[V(G) \setminus (R^{1} \cup S)]$ containing an element of $A$. We denote by $B$ the open neighbourhood of $S \cup R^1$ in $A$, that is, $B = (N[S\cup R^1] \cap A) \setminus (S \cup R^1)$.
    We use the following algorithm.
    \begin{itemize}
        \item We ask $\qu(A, S \cup R^1)$ in order to deduce $N[S \cup R^1] \cap A$, and then we ask $\qu(A,N[S \cup R^1] \cap A)$.
        \item We compute $D_b = \{ v \in A \cap S \mid d(v,b) \leq d(v, S \cup R^1)\}$ for $b \in B$, the set of vertices in $A\cap S$ which have a shortest path to $b$ that does not visit a vertex of $S\cup R^1$. 
        \item  Let $\mathcal{D} = \{ D_s \mid s \in B \}$. While there are two distinct elements $D_1,D_2 \in \mathcal{D}$ such that $D_1 \cap D_2 \neq \emptyset$, merge them in $\mathcal{D}$, that is, update $\mathcal{D} \leftarrow (\mathcal{D} \setminus \{D_1,D_2\}) \cup \{ D_1\cup D_2 \}$. We output $\mathcal{D}$.
    \end{itemize}
Note that any vertex $a\in A\setminus S$, is not in $S\cup R_1$, so will be in $D_s$ for at least one $s\in B$ (possibly $s=a$) before we do the last step of the algorithm. The last step ensures that the output is indeed a partition of $A$. 

We first argue that $\mathcal{D}$, as outputted by the algorithm above, is an over-approximation of the connected component partition of $G[A\setminus S]$ (that is, for any connected component $C$ of $G[A\setminus S]$, there exists $D \in \mathcal{D}$ such that $C \subseteq D$). It suffices to prove that for any edge $ab \in E(G[A\setminus S])$ there exists $D \in \mathcal{D}$ such that $\{a,b\} \subseteq D$. Suppose without loss of generality that $d(a,S \cup R^1) \leq d(b,S \cup R^1)$. Moreover let $s \in B$ such that $ d(a,s) = d(a,S\cup R^1) - 1$ and thus $a \in D_s$. Now $d(b,s) \leq d(a,s) + 1 \leq d(b,S \cup R^1)$ thus $b \in D_s$.  
    
We now argue that  $\mathcal{D}$ is an under-approximation too, by showing that $G[D\setminus S]$ is connected for all $D \in \mathcal{D}$. We first show this for the initial sets $D_s$ with $s\in N[S\cup R^1]\cap A$. Let $s \in B$. For any $v \in D_s$, by definition, $d(v,s) \leq d(v,S \cup R^1)$, thus there is a shortest path $P$ between $v$ and $s$ not using vertices of $S \cup R^1$. Moreover $s \in A$ and $A$ is separated from $V(G) \setminus A$ by $R^1$, therefore $P$ is contained in $A \setminus S$. This shows that $v$ is in the same connected component of $G[A\setminus S]$ as $s$. To see that $G[D]$ remains connected for all $D\in \mathcal{D}$ throughout the algorithm, note that when the algorithm merges two sets $D_1,D_2 \in \mathcal{D}$, they need to share a vertices, thus if both $G[D_1]$ and $G[D_2]$ are connected then $G[D_1 \cup D_2]$ is also connected. 

Remember that $|S| \leq \Delta^{3k/2} + 1 = O_{k,\Delta}(1)$ and that the bounded degree condition implies $|N[S \cup R^1]| \leq \Delta \cdot |S \cup R^1|$. This allow us to conclude that the query complexity is bounded by
\[
|A|\cdot|N[S\cup R^1]| \leq n_A\cdot \Delta |S \cup R^1| \leq n_A \cdot \Delta (r + |S|).\qedhere
\]
\end{proof}
We apply the algorithm from \cref{cl:find_cc} with the separator $S$ computed by \cref{cl:find_balanced_separtor_tl}. Knowing the partition, the algorithm can check if $S$ is indeed $\alpha$-balanced. If not, the algorithm repeats \cref{cl:find_cc} and computes a new potential separator. 
An single iteration succeeds with probability at least $2/3$ and each iteration is independent from the others, so the expected number of repetitions is $3/2$. 

We ask $\qu(S \cup R^{\leq 3k}, A)$. For each connected component $\Tilde{A}$ of $G[A \setminus S]$, we will reconstruct $G[\Tilde{A}]$ and then we will describe how to reconstruct $G[A]$.
If $|\Tilde{A}|\leq \log(n)$, then we ask $\qu(\Tilde{A},\Tilde{A})$ to reconstruct $G[\Tilde{A}]$. Otherwise, we will place a recursive call on $\Tilde{A}$, after guaranteeing that our desired properties mentioned at the start are again satisfied.
By definition, $G[\Tilde{A}]$ is connected. 
So we know property 1 holds when $A$ is replaced by $\Tilde{A}$.

To ensure properties 2 and 3 are also satisfied for the recursive call, we reconstruct $\Tilde{R}^i$, the set of vertices at distance exactly $i$ from $\Tilde{A}$.
As $S\cup R^1$ separates $\Tilde{A}$ from other component of $G[A\setminus S]$, for any other 
connected component $D$ of $G[A\setminus S]$ and for any $v\in D$, we have:
$$
d(\Tilde{A},v) = \min\limits_{s \in S \cup R^1} d(\Tilde{A},s) + d(s, v).
$$ 
Therefore we can compute $d(\Tilde{A},v)$ from the query results of $\qu(S \cup R^{\leq 3k}, A)$ for all $v \in A \cup R^{\leq 3k}$. This is enough to deduce $\Tilde{R}^i$ for any $i \leq 3k$ because $\Tilde{A} \subseteq A$ and thus $\Tilde{R}^{i} \subseteq A \cup R^{i}$. 

After we have (recursively) reconstructed $G[\Tilde{A}]$ for each connected component $\Tilde{A}$ of $G[A\setminus S]$, we reconstruct $G[A]$ by using that we already know all the distance between all pairs $(a,s)$ with $a\in A$ and $s\in S$. In particular, as we already asked $\qu(S\cup R^{\leq 3k},A)$ earlier in the algorithm, we know $G[S\cap A]$ and also how to `glue' the components to this (namely, by adding edges between vertices at distance 1).

By \cref{cl:find_balanced_separtor_tl}, each recursive call reduces the size of the set $A$ under consideration by a multiplicative factor of $\alpha$. Therefore, the recursion depth is bounded by $O_{\Delta,k}(\log n)$ and the algorithm will terminate.

We argued above that the algorithm correctly reconstructs the graph. It remains to analyse the query complexity.

We analyse the query complexity via the recursion tree, where we generate a child for a vertex when it places a recursive call. We can associate to each vertex $v$ of the recursion tree $T_R$, a subset $A_v\subseteq V(G)$ for which the algorithm is trying to reconstruct $G[A_v]$. The subsets associated to the children of a node $v$ are disjoint, since each corresponds to a connected component of $A_v\setminus S_v$ for some subset $S_v\subseteq V(G)$ that is an $\alpha$-balanced separator. In particular, the subsets associated to the leafs are disjoint.

In a leaf node $v$, the algorithm performs $|A_v|^2$ queries to reconstruct $G[A_v]$, where  $|A_v|\leq \log(n)$. If we enumerate the sizes $A_v$ for the leafs $v$ of the recursion tree as $a_1,\dots,a_\ell$, then 
$\sum_{i=1}^\ell a_i^2\leq \ell \log(n)^2\leq n \log(n)^2$, where we use that we have at most $n$ leafs since the corresponding subsets are disjoint.

Since there are at most $n$ leafs, and the recursion depth is $O_{k,\Delta}(\log n)$, there are $O_{k,\Delta}(n \log n)$ internal nodes.
Let $v$ be an internal node and let $n_A$ and $r$ denote the sizes of the corresponding subsets $A=A_v$ and $R^{\leq 3k}$. The algorithm makes the following queries:
\begin{itemize}
    \item Finding $z$ takes $O_{k,\Delta}(n_A\log(n_A+r))$ queries in Claim \ref{cl:find_z}. 
    \item $O_{k,\Delta}(n_Ar)$ queries to compute $S$ from $z$ and to find the connected components of $A\setminus S$ in Claim \ref{cl:find_cc}. This step and the previous step are repeated a constant number of times (in expectation).
    \item $O_{k,\Delta}(n_Ar)$ queries to set up the recursive calls to the children of $v$.
\end{itemize}
Since each recursive call increases the size of $R^{\leq 3k}$ by at most an additive constant smaller than $(\Delta+1)^{9k/2}$ (recall that $\Tilde{R}^{\leq 3k}\subseteq R^{\leq 3k}\cup N^{\leq 9k/2}[z]$), and the recursion depth is $O_{k,\Delta}(\log n)$, it follows from an inductive argument that $r= O_{k,\Delta}(\log n)$. So the number of queries listed above is $O_{k,\Delta}(n_A\log n)$.

To compute the total query complexity of internal nodes, we use the fact that for two nodes $v$ and $v'$ at the same recursion depth we have that $A_v \cap A_{v'} = \emptyset$. Therefore, by adding contribution layer by layer in the recursion tree we get a query complexity of $O_{k,\Delta}(n\log n)$ for any fixed layer, and the total number of queries performed sum up to:
\[
n\log^2 n +O_{k,\Delta}(\log n) O_{k,\Delta}(n\log n)=O_{k,\Delta}(n\log^2n). \qedhere
\]

We did not try to optimise the dependence in $k$ and $\Delta$ hidden in the $O_{k,\Delta}$ notation throughout the proof of \cref{thm:randtl}. Expanding all $O_{k,\Delta}$ notations in the proof implies that our algorithm uses $\Delta^{O(k)} n \log^2 n$ queries. It would be interesting to reduce this dependence to a polynomial in $\Delta$ and $k$.

\end{proof}

\section{Conclusion}
\label{sec:concl}
In this paper, we shed further light on what graph structures allow efficient distance query reconstruction.  We expect that the true deterministic and randomised query complexity of graphs of bounded bounded treelength and bounded maximum degree is $\Theta(n\log n)$, matching the lower bound which already holds for trees from \cite{bastide2023optimal}. 

It seems natural that having small balanced separators helps with obtaining a quasi-linear query complexity. We show this is indeed the case when some additional structure on the separator is given (namely, vertices being `close'). 
A possible next step would be to see if this additional structure can be removed.
\begin{problem}
Does there exist a randomised algorithm that reconstructs an $n$-vertex graph of maximum degree $\Delta$ and treewidth $k$ using $\Tilde{O}_{\Delta,k}(n)$ queries in expectation?
\end{problem}
Some parts of the algorithm still work, such as checking whether a given set $S$ is a balanced separator (via Claim \ref{cl:find_cc}). When trying to recursively reconstruct one of the components, it is important to `keep enough information about the distances'. In our algorithm, we can include the shortest paths between the vertices in the separator; this is the main purpose of the `boundary sets' $R^i$ and why we carefully chose the domain for $z$ in Claim \ref{cl:find_z}. The possibility to do this is almost the definition of bounded treelength. Therefore, we believe that a new approach would be needed to produce a good candidate for a balanced separator in the general case.

Finally, we remark that it may very well be that techniques building on separators are needed as part of a potential quasi-linear algorithm for reconstructing graph classes that do not directly guarantee the existence of such separators. Indeed, there are approaches that actually do not work well even on trees, yet are good at handling certain graphs without small balanced separators, and perhaps a combination of both types of methods will be needed to handle the class of all bounded degree graphs. For example, the approach taken by \cite{mathieu2021simple} is to ask all queries to a randomly selected set of vertices. On some graph classes (such as random regular graphs, which do not have small balanced separators), this already forces most of the non-edges with high probability and so the remaining pairs can be queried directly. But in order to beat the best-known upper bound for general graphs of bounded degree (of $\widetilde{O}_\Delta(n^{3/2})$ from \cite{mathieu2013graph}), such an approach cannot be applied directly, even for trees. Indeed, for a complete binary tree on $n$ vertices, the distances to any set $S$ with at most $\frac1{100}\sqrt{n}$ vertices, no matter how cleverly chosen, leave many pairs of distances undetermined. In fact, there are approximately $\sqrt{n}$ vertices at height $\lfloor \tfrac12\log n\rfloor$ in this tree, and $S$ will miss the `trees below' most of those $\sqrt{n}$ vertices entirely. This means that there are still $\Omega(n^{3/2})$ pairs $u,v$ that form a non-edge, yet have the same distance to all vertices in $S$. This means that even for the class of all bounded degree graphs, there may need to be a part of the algorithm which exploits the structure of `nice' separators, when they exist.

\bibliographystyle{plainurl}
\bibliography{biblio}

\end{document}